%% file: viveka.tex
\documentclass[conference]{IEEEtran}
\usepackage{cite}
\usepackage{amsmath,amssymb,amsfonts, amsthm}
\usepackage{soul}
\usepackage[ruled,linesnumbered]{algorithm2e}
\usepackage{subcaption}
\usepackage{graphicx}
\usepackage{textcomp}
\usepackage{xcolor}
\usepackage{svg}
\usepackage{url}
\usepackage{booktabs}
\usepackage{enumitem}

\newtheorem{theorem}{Theorem}
\newtheorem{remark}{Remark}

\def\BibTeX{{\rm B\kern-.05em{\sc i\kern-.025em b}\kern-.08em
    T\kern-.1667em\lower.7ex\hbox{E}\kern-.125emX}}
\begin{document}
    \input{sections/paper_meta}
    \input{sections/abstract}


    \input{sections/intro}

    \input{sections/motivation}
    \input{sections/system_model}
    \input{sections/viveka}
    \input{sections/evaluation}
    \input{sections/related}
    \input{sections/conclusion}

    \bibliographystyle{IEEEtran}
    \bibliography{refer}

    \input{sections/appendix.tex}

\end{document}

%% file: sections/paper_meta.tex
\title{
    Viveka: Context-Aware Sensing for Energy Efficiency in Smart Wearables
}


\author{
    \IEEEauthorblockN{Nikhil Sreekumar, Abhishek Chandra}
    \IEEEauthorblockA{\textit{Dept. of Computer Science and Engineering} \\
    \textit{University of Minnesota}\\
    MN, USA \\
    \{sreek012, chandra\}@umn,edu}
}

\maketitle

%% file: sections/abstract.tex
\begin{abstract}
    The proliferation of multi-sensor Internet of Things (IoT) systems, from Body Sensor Networks (BSNs) to industrial monitoring, is increasingly constrained by strict energy budgets and limited on-device storage. Continuous high-fidelity sensing leads to rapid battery depletion and data gaps that compromise application reliability. Existing strategies address this through sensor selection or adaptive sampling in isolation, or rely on computationally expensive agents for joint optimization. They lack context granularity or introduce significant overhead, and critically, they do not account for the risk that an aggressive, context-specific sensing policy applied to a misidentified context degrades accuracy. In this paper, we formulate joint sensor and sampling-rate selection as an NP-hard energy-minimization problem and propose Viveka \footnote{The name \textit{Viveka} is derived from the Sanskrit term for `discernment' or `discrimination' (distinguishing the real from the unreal). In this framework, it symbolizes the system's ability to semantically discern the \textbf{essential} (minimal sufficient sensors) from the \textbf{non-essential} (redundant data) based on the instantaneous context.}, a lightweight, context-aware framework. Viveka couples a cheap, always-on controller that estimates context and how much to trust that estimate with a stability and confidence gated policy that applies an aggressive per-context configuration only when context is certain, and falls back safely otherwise. Per-context configurations are instantiated using permutation feature importance and spectral energy analysis. Evaluation on the MHEALTH and PAMAP2 datasets shows that Viveka achieves up to 75\% energy savings and 78\% data reduction over standard baselines in a best-case configuration, while maintaining classification accuracy within 3-5\% of the baselines.
\end{abstract}

\begin{IEEEkeywords}
Context awareness, sensors, energy efficiency, data reduction, smartwearables
\end{IEEEkeywords}

%% file: sections/intro.tex
\section{Introduction}
    The proliferation of Internet of Things (IoT) from ubiquitous smart cities to precision agriculture demonstrates how these multi-sensor systems are deeply embedded in our daily life. They drive applications ranging from daily fitness tracking \cite{bib:intro:fitness} and animal migration monitoring \cite{bib:intro:biolog} to high stakes scenarios such as cargo crew wellness assessment \cite{bib:intro:smartwatch}, military personnel health monitoring during active missions \cite{bib:intro:usnavy} and predictive maintenance in industrial manufacturing facilities \cite{bib:intro:industry}. In all these domains, the trend is towards increasing number of sensors, for instance in Body Sensor Networks (BSNs) or Industrial IoT (IIoT) clusters, to capture high fidelity data.

    \begin{figure}[h]
        \centering
        \includegraphics[scale=0.4]{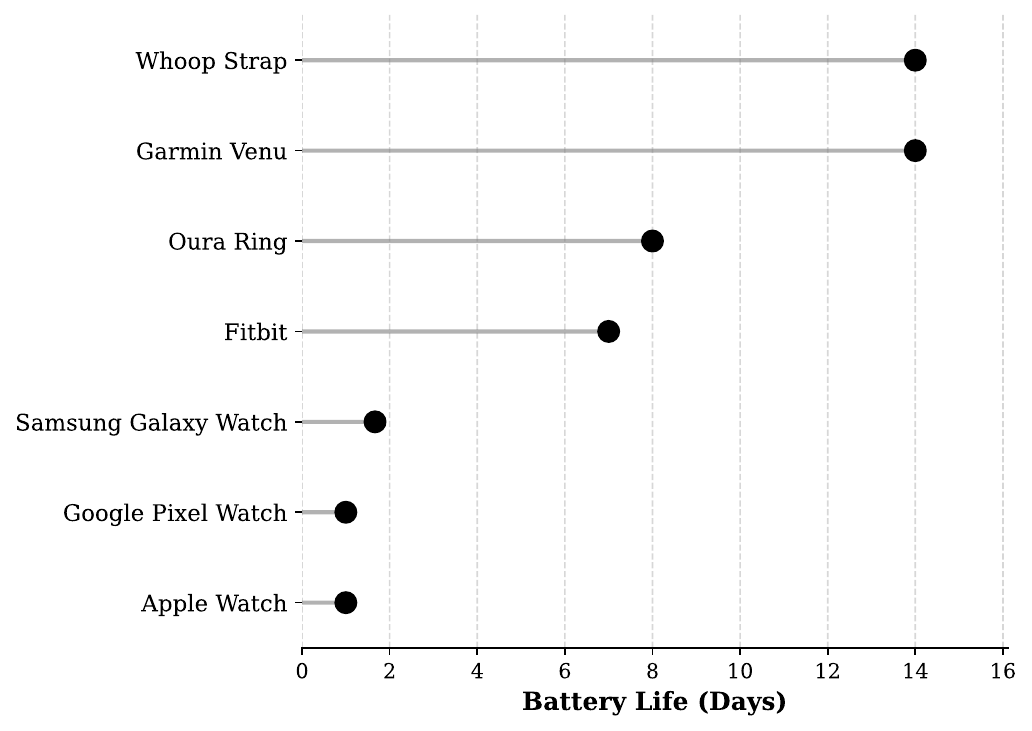}
        \caption{Battery of commercial smart wearables.}
        \label{fig:battery_life}
    \end{figure}
    The potential of these multi-sensor systems are impeded by the limited battery/ strict energy budgets, which has not kept pace with the growth of sensor capabilities and processing demands. Figure \ref{fig:battery_life} shows the battery life in days for known commercial smart wearables. Continuous high fidelity sensing is the worst case scenario for power consumption, leading to rapid battery depletion and frequent data gaps during recharge cycles. These gaps could result in critical event misses relevant to the application at hand. Furthermore, the deluge of data from always-on sensors overwhelms an already limited on-device storage on the sensors systems.

    To mitigate these constraints, existing research on energy optimization focus on two isolated strategies: Sensor selection and Adaptive sampling. In Sensor selection, a subset of sensors from a global set are identified to reduce active hardware components \cite{bib:intro:min_cost, bib:intro:sensor_class_co_opt, bib:intro:shapley, bib:intro:glowworm, bib:intro:vfds, bib:intro:dana}. As for Adaptive Sampling, the frequency of data collection per sensor is varied to lower the computation and sensing load \cite{bib:intro:context_aware, bib:intro:freqsense, bib:intro:lasa, bib:intro:ecotrack, bib:intro:adaptive_segments}. While both these approaches are effective, combining them could reap the compounding benefits. Existing joint optimization strategies typically rely on instance-wise reinforcement learning agents \cite{bib:related:idss}, early-exit neural architectures \cite{bib:intro:freqsense} or static model pruning \cite{bib:intro:coss}. However, these approaches often introduce significant operational penalties, including energy-inefficient sensor thrashing (sensors toggling ON/OFF), unpredictable computational overhead or an inability to adapt to instantaneous context changes.

    In this paper, we propose Viveka, a context-aware framework that achieves high energy savings by pruning data generation at the source, while remaining robust to the errors inherent in on-device context detection. The central difficulty is not identifying an efficient per-context sensing policy, but applying one safely: an aggressive, context-specific configuration applied to a misidentified context degrades recognition accuracy, a risk that static and globally-optimized policies do not address. Viveka resolves this with a closed-loop design. A lightweight, always-on controller continuously estimates the current context and how much to trust that estimate. A stability and confidence gated policy then applies the fully optimized per-context configuration only when context is certain, widens to a safe union of sensors under moderate uncertainty, and falls back to a conservative default when context is unresolved. The per-context configurations themselves are computed offline: the minimal sensor set per context via Permutation-based Feature Importance (PFI), and the lowest viable sampling rate per sensor via spectral energy analysis. The identified configurations are then stored in a granular, activity-aware lookup table. Because these configurations are exercised at runtime only under the gating policy's control, Viveka can exploit an aggressive policy without being derailed by an occasionally wrong context estimate.

    We validate Viveka using Human Activity Recognition (HAR) \cite{bib:intro:wear_survey} on smart wearables as a primary tested. HAR presents a challenging usecase due to the stochastic nature of human motion and the size/power constraints of wearable hardware. However, the principles of Viveka is generalizable to any energy constrained, multi-sensor IoT environment.

    The main contributions of this paper are as follows:
    \begin{itemize}
        \item We formulate sensor and sampling rate selection as an energy minimization problem constrained by classification error, and prove it is NP-hard, motivating a runtime-tractable heuristic rather than an exact solver.
        \item We propose Viveka, a closed-loop context-aware sensing framework built around a lightweight, always-on controller decoupled from the recognition model. The controller's role is not to classify accurately but to continuously estimate how much to trust the current context estimate, gating whether the aggressive, context-specific sensing policy can be safely applied.
        \item We introduce a stability and confidence gated decision policy with three operating tiers: (1) a fully optimized per-context policy when context is stable and confident, (2) a recovery policy unioning the sensor requirements of plausible contexts under moderate uncertainty, and (3) a conservative fallback when context is unresolved.
        \item We co-design training with deployment: the recognition model is trained with rate-degraded, reconstructed-signal augmentation so it remains robust to the subsampled input it receives at runtime, closing the gap between offline policy computation and online execution.
        \item We instantiate per-context sensor and rate selection using permutation feature importance and spectral (Nyquist-based) energy analysis. We then evaluate the full system under a realistic, delayed predict-and-act protocol with a datasheet-grounded energy model spanning sensing, compute and radio.
        \item Extensive evaluation on MHEALTH and PAMAP2 demonstrates that Viveka achieves up to 75\% energy savings and 78\% data reduction over standard baselines, \footnote{These figures reflect a best-case configuration in which policy thresholds were selected on a single subject-disjoint split. Subsequent leave-one-subject-out cross-validation, with thresholds selected only on held-out training subjects, yields more conservative savings (approximately 64\% on MHEALTH and 49\% on PAMAP2) while preserving comparable accuracy; a full treatment is the subject of forthcoming work.} while maintaining classification accuracy within 3-5\% of the baselines.
    \end{itemize}

%% file: sections/motivation.tex
\section{Motivation}
\label{sec:motive}
    In this section, we explore various wearable applications that benefit from extended battery life and the application-specific insights that can be leveraged to achieve it.

    \subsection{Motivating Applications}
    The need for extended battery life is essential to realize the full potential of wearable technology. Consider the following applications where extended battery life is critical:
    \begin{itemize}
        \item \textbf{Daily Fitness and Wellness Tracking}: The consumer wellness market represents the most widespread use of wearable technology. Devices that track daily fitness, sleep quality and stress levels rely on consistent, long-term use to provide meaningful insights. The user experience is directly tied to the device's convenience and a primary point of friction is the need for frequent charging. A user who must charge their watch every day may forget to put it back on, resulting in incomplete data for activity goals and a total loss of sleep tracking for that night. To become a seamless part of a user's life, these devices must demand minimal attention. Extended battery life is therefore a key feature that directly impacts user retention and the overall effectiveness of the wellness platform \cite{bib:motive:health_fitness}.

        \item \textbf{Crew Readiness Monitoring in Critical Missions}: Monitoring the operational readiness of personnel in austere environments, such as on oil rigs or during military missions, is critical for safety and mission success. In these remote and demanding settings, frequent charging is not just inconvenient but also introduces a significant point of failure. A device with a depleted battery cannot provide vital alerts on fatigue or stress, directly compromising crew safety. To be effective, these wearables must operate for extended periods with minimal user intervention. This necessitates advanced, on-board energy management systems that can balance data collection with power preservation to ensure reliability \cite{bib:intro:smartwatch}.
    \end{itemize}

    Across these applications, activity recognition is a common thread. In all the cases, the primary goal is to capture and analyze activity patterns to extract meaningful insights. However, the energy constraints of wearable devices make this goal challenging to achieve. To address this challenge, researchers have proposed various techniques to optimize energy consumption while maintaining the accuracy of the activity recognition system, including the sensor selection and adaptive sampling rate methods discussed previously.

    \subsection{Sensor energy-accuracy trade offs}
        The effectiveness of sensor selection arises from the significant variance in power consumption across different sensors. Consider a typical set of commercial sensors used in a smartwatch \cite{bib:motive:bosch_wearable}, with specifications taken from their datasheets: the BMI270 (accelerometer and gyroscope), BMM350 (magnetometer) and BMP858 (pressure sensor). As shown in Figure \ref{fig:motive:current_sensors}, the current consumption for these sensors operating at a comparable low-power sampling rate (near 50 Hz) varies dramatically. The gyroscope is the most power-hungry, followed by the magnetometer, pressure sensor and accelerometer. In this scenario, if the gyroscope's data could be omitted for certain activities without a significant loss in HAR accuracy, the current consumption for sensing could be reduced by approximately 70\%. This highlights the need to identify the optimal set of sensors required for each activity to maximize energy savings.

        \begin{figure}
            \centering
            \includegraphics[scale=0.5]{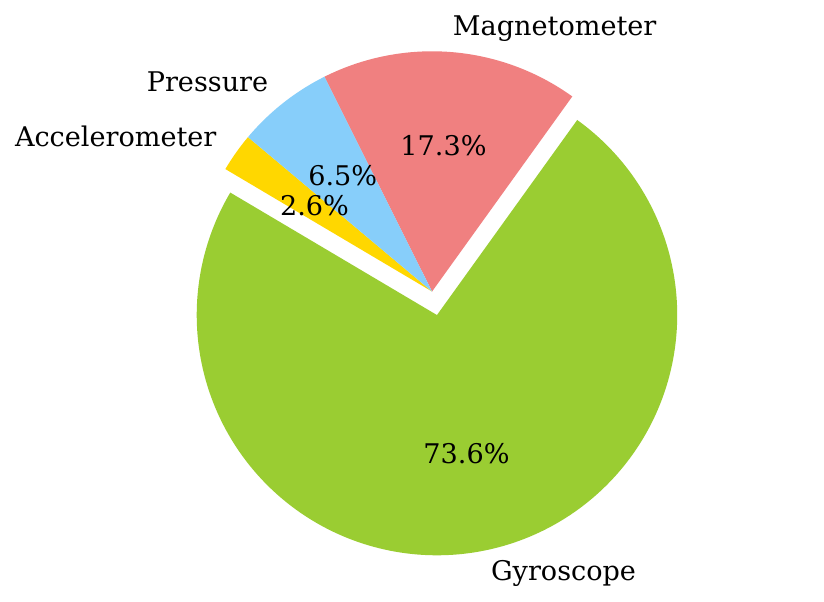}
            \caption{Current consumption of sensors at $~$50Hz}
            \label{fig:motive:current_sensors}
        \end{figure}

        To gain better insight into the relationship between activities and sensors, we conducted an experiment where a Long Short-Term Memory (LSTM) model was trained on the WISDM \cite{bib:motive:wisdm} dataset. The WISDM dataset consists of data from smartphones and smartwatches across multiple users for 18 activities. Since our focus is on wearables, only the smartwatch data was used for training. The data consists of accelerometer and gyroscope data sampled at 20 Hz. We then applied Permutation Feature Importance (PFI) \cite{bib:motive:pimp} to analyze the contribution of each sensor feature to each activity classification. Figure \ref{fig:motive:pfi_20hz} shows the accuracy drop when a feature is permuted, indicating the relative importance of each sensor for each activity. The x-axis represents the sensor features (accelerometer: acc\_x, acc\_y, acc\_z; gyroscope: gyro\_x, gyro\_y, gyro\_z) and the y-axis represents the activities (A to S). The results show that the importance of each sensor varies by activity. For example, activity \textit{A} is dependent on both accelerometer and gyroscope data, while activity \textit{D} is mostly dependent on accelerometer data. This insight can be utilized at runtime to decide which sensors to activate for a given activity, thereby reducing energy consumption.

        \begin{figure}
            \centering
            \resizebox{0.5\textwidth}{!}{
                \includegraphics[scale=0.4]{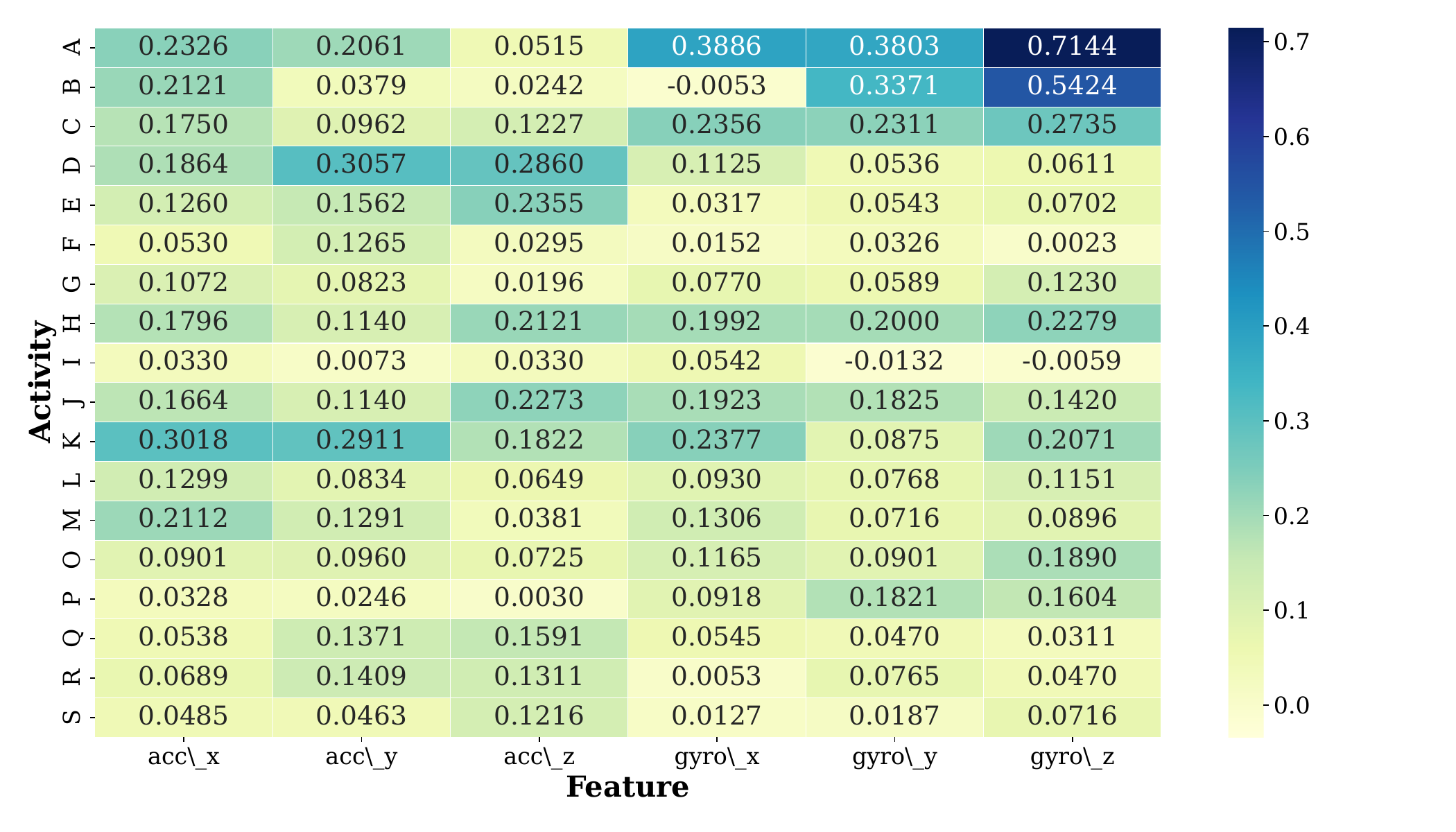}   
            }
            \caption{PFI for the WISDM dataset at 20Hz}
            \label{fig:motive:pfi_20hz}
        \end{figure}
            
        \begin{figure}
            \centering
            \resizebox{0.5\textwidth}{!}{
                \includegraphics[scale=0.4]{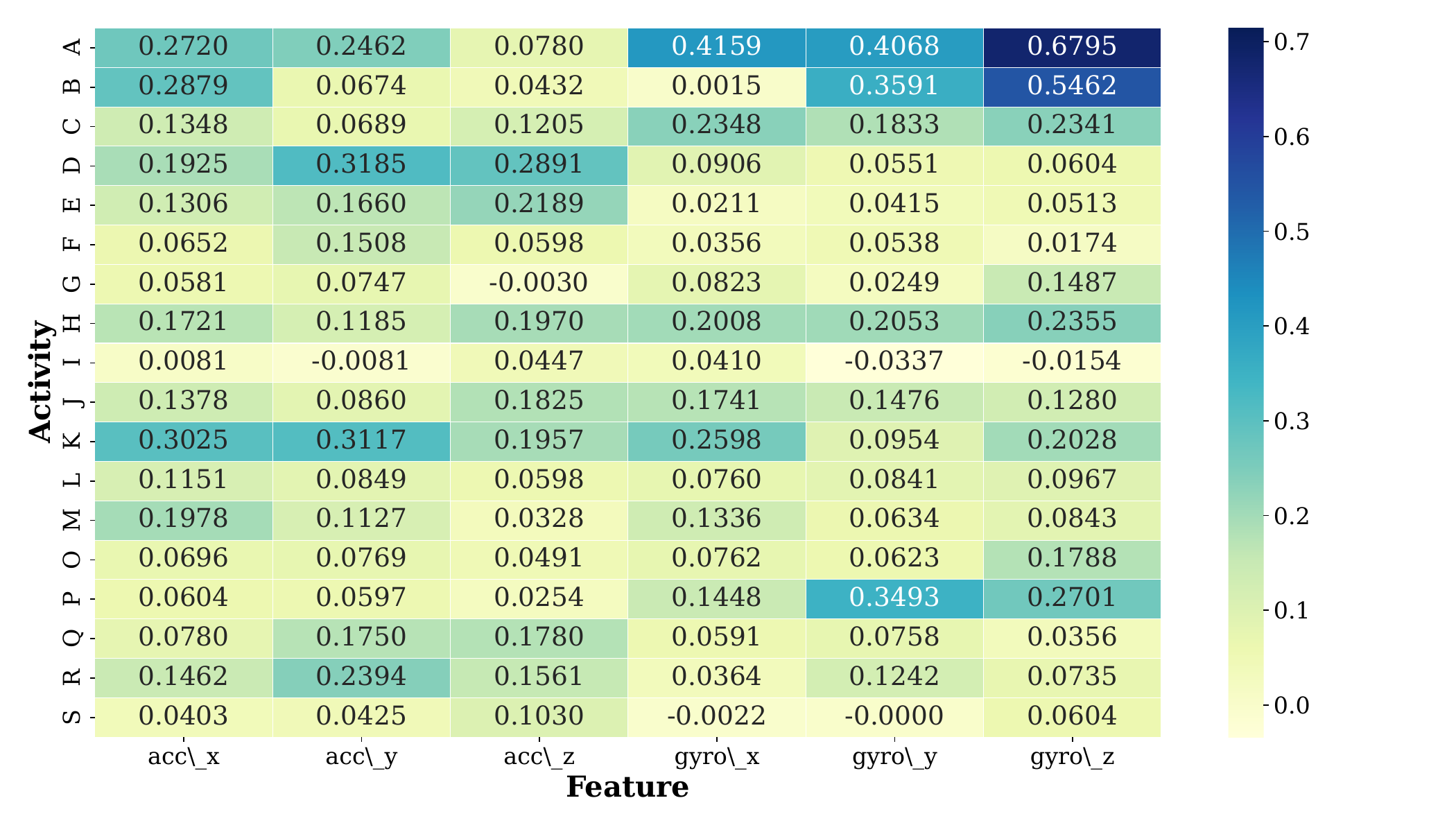}
            }
            \caption{PFI for the WISDM dataset at 10Hz}
            \label{fig:motive:pfi_10hz}
        \end{figure}

        The sensor sampling rate also provides an opportunity for energy savings, as it can often be reduced without a major impact on performance. To demonstrate this, we extended our previous experiment using the same LSTM model trained on the 20 Hz WISDM dataset. During testing, we downsampled the data to mimic 10 Hz sensing by selecting every other data point. We then interpolated this data back to 20 Hz before feeding it to the model. This resulted in the overall test accuracy dropping by only 3.66\%, from 84.11\% to 80.44\%. Figure \ref{fig:motive:pfi_10hz} shows the permutation-based accuracy drop for the 10 Hz data. When compared to the original 20 Hz results, there is only a slight change in the accuracy drop for most activities, with the notable exception of gyro\_z for activity \textit{A}. In addition, it could be also see that the drop in accuracy slightly increases in 10 Hz. This is due to the interpolation process resulting in smoother data. This shows that even at a lower sampling rate, the model can still achieve high accuracy, highlighting the potential of adaptive sampling to reduce energy consumption while maintaining classification performance.

%% file: sections/system_model.tex
\section{System Model}
    In this section, we describe the system model for our work. Based on this model, we formulate the sensor and sampling rate selection as an energy minimization problem.
    \subsection{System Components}
        The system we consider consists of multiple sensors that reside on a single edge device, such as a smartwatch, ring, or band. Alternatively, the system could be a body sensor network (BSN) where sensors connect to a central gateway device. The edge device is constrained by limited battery capacity, processing power (CPU/GPU), and storage. It is powered by a battery of limited capacity, necessitating periodic recharging. Data collected from the sensors is processed and stored on the device for a period before being transmitted to a nearby edge server (e.g., a smartphone, tablet, or hub) via low-energy communication protocols such as Bluetooth Low Energy (BLE). Upon receiving the data, the edge server reconstructs the original signal using techniques like linear interpolation. This reconstructed data is then fed to a model for activity recognition and the results are utilized by upstream applications. Figure \ref{fig:sysmodel:system_model} shows a high-level overview of the system.
        
        \begin{figure}
            \centering
            \includegraphics[scale=0.4]{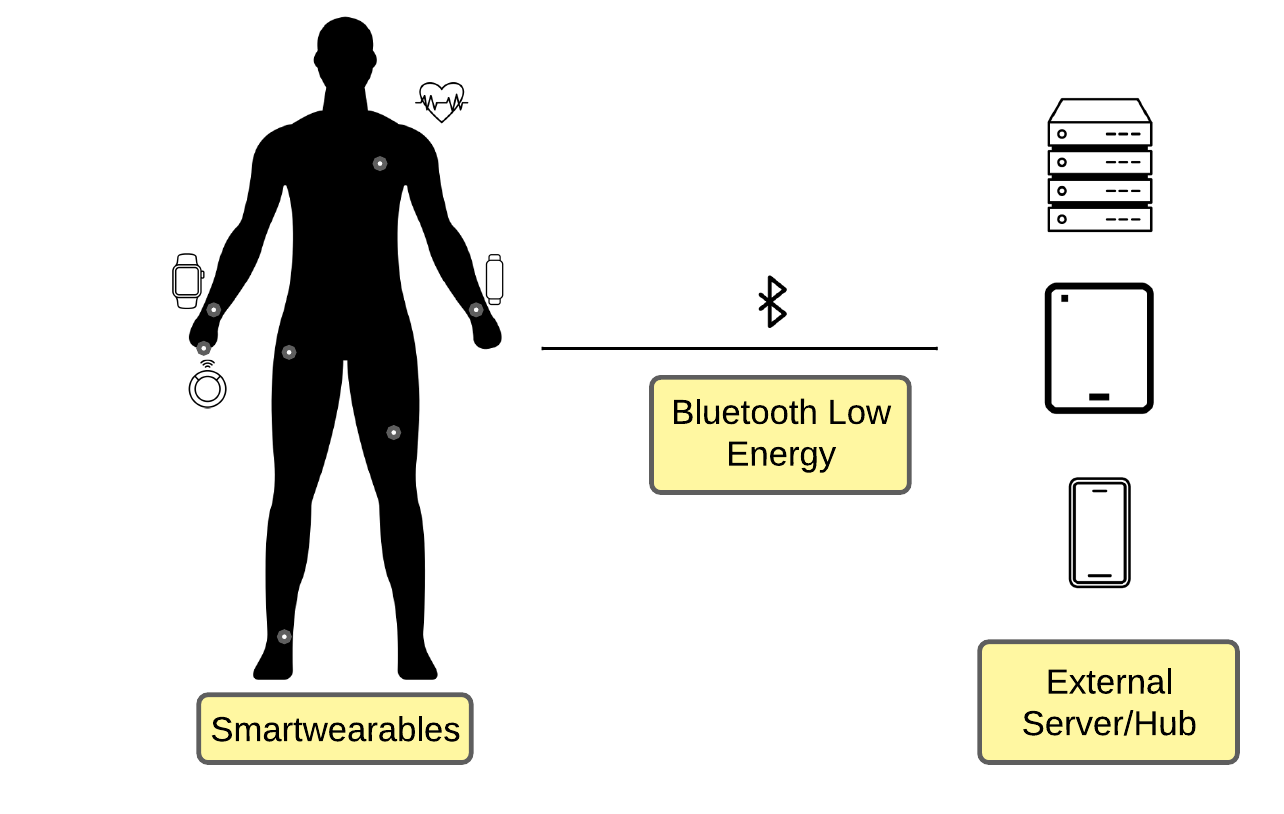}
            \caption{System Model}
            \label{fig:sysmodel:system_model}
        \end{figure}

    \subsection{Problem Formulation}
        We will now formulate the sensor selection and adaptive sampling as an optimization problem based on the system model.

        Consider the following parameters for the formulation:
            \begin{itemize}[leftmargin=8pt]
                \item $\mathcal{S}$: The set of available sensors.
                \item $R_{s}$: The discrete set of sampling rates for sensor $s \in \mathcal{S}$, indexed by $j$. To account for sensor selection, we explicitly include the zero rate (sensor OFF). Thus, $R_s = \{r_{s, 0}, r_{s, 1}, \dots, r_{s, |R_s|-1}\}$, where $r_{s, 0} = 0$ represents the inactive state and $r_{s, j} > 0$ for $j > 0$ represents active sampling rates.
                \item $E(s, r)$: The power consumption of sensor $s$ operating at sampling rate $r$. We assume $E(s, 0) = 0$.
                \item $\mathcal{X}$: The set of all feasible configurations. A configuration vector $x \in \mathcal{X}$ represents the sampling rate selection for all sensors, denoted as $x = [x_{1}, x_{2}, \cdots, x_{|\mathcal{S}|}]$, where $x_{s} \in R_{s}$ is the rate chosen for sensor $s$. The size of this search space is given by $|\mathcal{X}| = \prod_{s \in \mathcal{S}} |R_s|$. If all sensors have $N$ rates, this simplifies to $N^{|\mathcal{S}|}$, indicating exponential growth with respect to the number of sensors.
                \item $C(x)$: The expected classification error of the inference model when data is collected using configuration $x$.
                \item $C_{threshold}$: The maximum tolerable classification error for the application to remain viable.
            \end{itemize}

        \subsubsection{Energy and Decision Variables}

            The total energy consumption for a configuration $x$ is the sum of the energy costs of the selected rates for each sensor:
            \begin{align}
                E_{total} (x) = & \sum_{s \in \mathcal{S}} E(s, x_{s})
            \end{align}

            To formulate the optimization problem as an Integer Linear Program (ILP), we introduce binary decision variables $\delta_{s, j}$ to represent the selection of specific rates:
            \begin{align}
                \delta_{s, j} = \begin{cases}
                            1  & \text{if sensor $s$ is assigned sampling rate $r_{s, j}$}\\
                            0  & \text{otherwise}
                            \end{cases}
            \end{align}

            These binary variables map to the configuration $x$ such that $x_{s} = \sum_{j} r_{s,j} \cdot \delta_{s,j}$. Consequently, the energy function can be rewritten in terms of $\delta$:
            \begin{align}
                E_{total} (x) = \sum_{s \in \mathcal{S}} \sum_{j} E(s, r_{s, j}) \cdot \delta_{s, j}
            \end{align}

        \subsubsection{Optimization Problem (APSSE)}
            We will call the optimization as \textit{Accuracy Preserving Selection and Sampling for Energy efficiency} (APSSE) problem.

            \begin{align}
                \textbf{Minimize:} \quad & \sum_{s \in S} \sum_{j} E(s, r_{s, j}) \cdot \delta_{s, j} \label{eq:min:obj}\\
                \textbf{Subject to:} \quad & C(x) \leq C_{threshold} \label{eq:cons:1}\\
                                        & \sum_{j} \delta_{s, j} = 1, \quad \forall s \in S \label{eq:cons:2}\\
                                        & \delta_{s, j} \in \{0, 1\}, \quad \forall s \in S, \forall j \label{eq:cons:3} 
            \end{align}
            The summations over $j$ iterate through all valid rate indices $\{0, \dots, |R_s|-1\}$ for sensor $s$. The constraint in \eqref{eq:cons:1} depends on the configuration $x$, which is determined by the decision variable $\delta$ as described above. Constraint \eqref{eq:cons:1} ensures the solution meets the accuracy requirement. Constraint \eqref{eq:cons:2} enforces that every sensor must be in exactly one state (which includes the state of being OFF/0Hz).

            In Appendix \ref{apdx:nphard}, we prove APSSE is NP-Hard. The computational complexity of the APSSE is $O(N^{|\mathcal{S}|})$, where $N$ is the number of discrete sampling rates per sensor (we assume it is same across all sensors for ease) and $|\mathcal{S}|$ is the number of sensors. It is exponential in the number of sensors. For a system with large number of sensors, an exact solver will require a long time to converge, rendering it completely unsuitable for online, per-activity adaptation, thereby necessitating the heuristic approach proposed in the next section.

%% file: sections/viveka.tex
\section{Viveka}
    In this section, we propose Viveka, a novel, context-aware framework that selects sensors and corresponding sampling rates based on the context to reduce the energy consumption associated with sensing. Since the APSSE joint optimization is NP-Hard, Viveka employs a two stage heuristic: first, it determines the minimal set of sensors required per activity. Second, it identifies the sample rate required per sensor in the minimal set while ensuring minimal impact on overall task accuracy. 

    Figure \ref{fig:viveka:components} showcases the Viveka framework and its main modules. Our primary contributions lie within the Sensor Selection and Adaptive Sampling modules.

        \begin{figure}
            \centering
            \resizebox{0.5\textwidth}{!}{
                \includegraphics[scale=0.6]{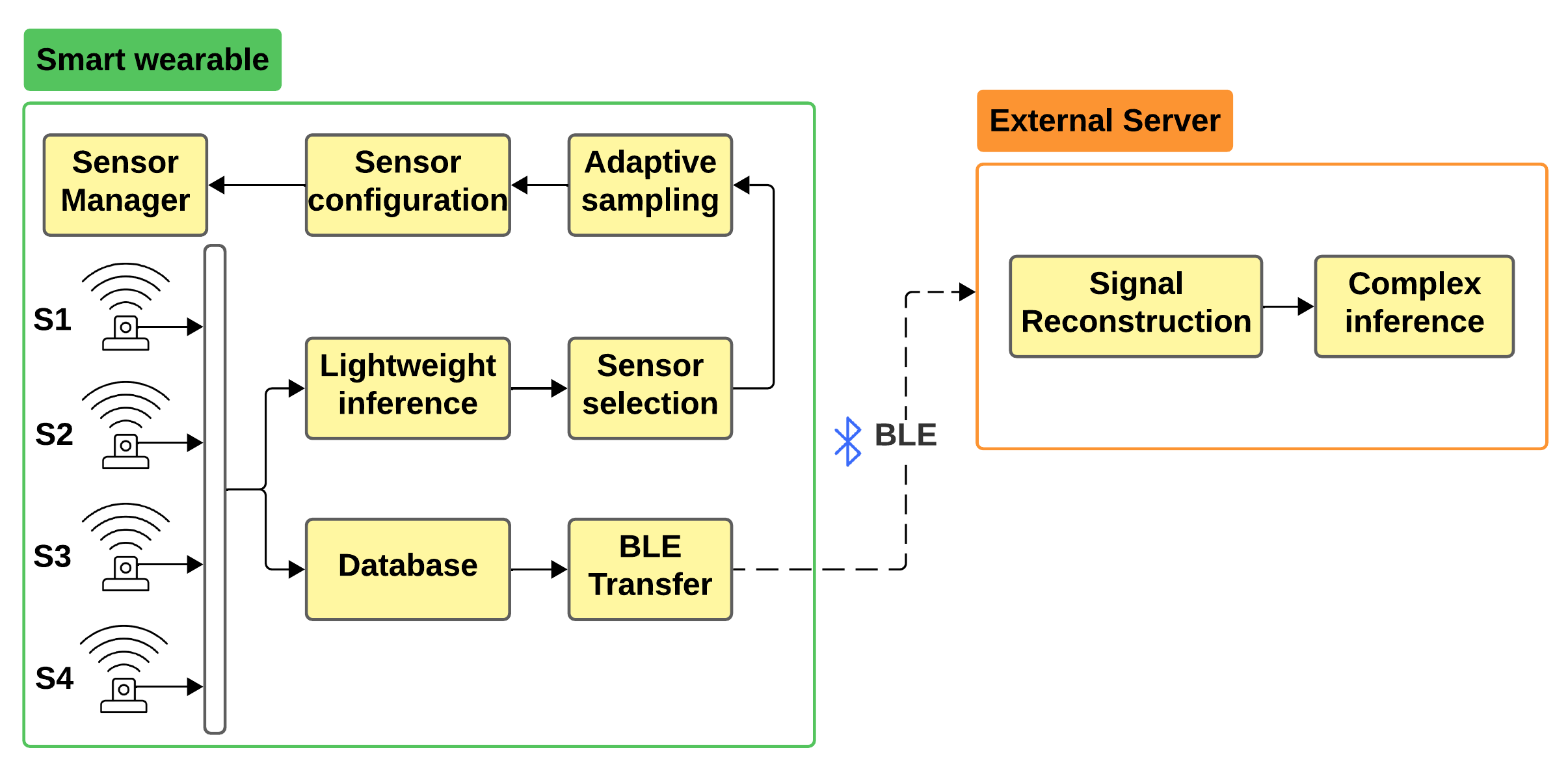}  
            }
            \caption{Viveka framework}
            \label{fig:viveka:components}
        \end{figure}
    
    \subsection{Sensor Selection}
        The strategy (Algorithm \ref{alg:offline_sensor_selection}) employs an offline analysis to determine the most important sensor set for each distinct activity using Permutation-based Feature Importance (PFI). In our system, we have two pre-trained models: A light-weight model is used for real-time activity inference on the wearable device, while the model on edge server is used for more complex processing. The goal of the light-weight model is to provide an understanding of the current context and we do not expect perfect accuracy. For example, it may report someone is doing an activity similar to running or jogging. To ensure a comprehensive selection, PFI is calculated independently for the two models. In PFI, the features corresponding to a sensor are permuted and fed to the model to measure the impact on accuracy. A significant drop in model accuracy indicates that the sensor's features are highly important. For each activity, we identify the sensors that are important for both models to ensure that no critical information is lost when the data is processed on the edge server. The selection of these sensors is based on a PFI threshold ($\theta_{PFI}$). The final set of essential sensors for a given activity is the union of the important sensors identified from both models, creating a policy ($\Pi_{\mathcal{S}}$) that is effective across the entire system.

        \begin{algorithm}[t]
            \footnotesize
            \caption{Offline Generation of the Sensor Selection Map ($\Pi_{\mathcal{S}}$)}
            \label{alg:offline_sensor_selection}
            \SetKwInOut{Input}{Input}
            \SetKwInOut{Output}{Output}
            \SetAlgoLined
            \DontPrintSemicolon

            \Input{
                Training data $\mathcal{D}$, models $M_{wearable}$ \& $M_{edge}$, activities $\mathcal{A}$, sensors $\mathcal{S}$, PFI threshold $\theta_{PFI}$
            }
            \Output{Policy map $\Pi_{\mathcal{S}}$ mapping each activity to its optimal sensor set.}
            \BlankLine

            \tcp{Initialize the policy map}
            $\Pi_{\mathcal{S}} \leftarrow \text{new empty Map()}$

            \ForEach{activity $A \in \mathcal{A}$}{
                \tcp{Filter data for the current activity}
                $D_A \leftarrow \{d \in \mathcal{D} \mid \text{label}(d) = A\}$
                
                \tcp{Identify important sensors from both models using PFI}
                $S_{w} \leftarrow \{s \in \mathcal{S} \mid \text{PFI}(M_{wearable}, D_A, s) \ge \theta_{PFI}\}$\\
                $S_{e} \leftarrow \{s \in \mathcal{S} \mid \text{PFI}(M_{edge}, D_A, s) \ge \theta_{PFI}\}$

                \tcp{Store the union of important sensors}
                $\Pi_{\mathcal{S}}[A] \leftarrow S_{w} \cup S_{e}$
            }
            \Return $\Pi_{\mathcal{S}}$
        \end{algorithm}
    
    \subsection{Per sensor Sampling rate Selection}
        Following the identification of essential sensors ($\Pi_{\mathcal{S}}$) for each activity, the strategy determines the optimal sampling rate for each of these sensors individually (Algorithm \ref{alg:offline_sampling_rate}). This is accomplished by applying the Nyquist-Shannon sampling theorem in an energy-aware context. For a specific activity, the signal data from each essential sensor is transformed into the frequency domain using a Fast Fourier Transform (FFT). The analysis then identifies the maximum frequency component needed to capture a predefined portion of the signal's total energy, as determined by an energy threshold ($\theta_{FFT}$). According to the Nyquist-Shannon theorem, the minimum sampling rate required to reconstruct the signal without losing this information is twice this maximum frequency. The strategy then selects the lowest available discrete sampling rate from a predefined list (e.g., [1, 2, 5, 10, 25, 50] Hz) that meets or exceeds this calculated minimum ($R_{map, A}[s]$). This per-sensor optimization results in an efficient policy ($\Pi_{\mathcal{R}}$) where, for the same activity, a high-frequency sensor like an accelerometer may be sampled at 50 Hz while a slower-changing sensor like a magnetometer is sampled at only 10 Hz, significantly reducing overall data volume and energy consumption. The tuning of the FFT energy threshold ($\theta_{FFT}$) allows for a trade-off between data quality and energy consumption.

            \begin{algorithm}[t]
                \footnotesize
                \caption{Offline Generation of the Optimal Sampling Rate Map ($\Pi_{\mathcal{R}}$)}
                \label{alg:offline_sampling_rate}
                \SetKwInOut{Input}{Input}
                \SetKwInOut{Output}{Output}
                \SetAlgoLined
                \DontPrintSemicolon

                \Input{
                    Training data $\mathcal{D}$, Sensor map $\Pi_{\mathcal{S}}$, Activities $\mathcal{A}$, FFT energy threshold $\theta_{FFT}$
                }
                \Output{Optimal sampling rate map $\Pi_{\mathcal{R}}$.}
                \BlankLine

                \tcp{Initialize the rate policy map}
                $\Pi_{\mathcal{R}} \leftarrow \text{new empty Map()}$

                \ForEach{activity $A \in \mathcal{A}$}{
                    \tcp{Get important sensors for the activity}
                    $S_{A} \leftarrow \Pi_{\mathcal{S}}[A]$
                    
                    \tcp{Initialize a map for the current activity's rates}
                    $R_{map, A} \leftarrow \text{new empty Map()}$
                    
                    \ForEach{sensor $s \in S_{A}$}{
                        \tcp{Find max frequency and calculate Nyquist rate}
                        $f_{max} \leftarrow \text{FFT}(\text{Signal}(\mathcal{D}, A, s, \theta_{FFT}))$\\
                        $R_{map, A}[s] \leftarrow 2 \times f_{max}$
                    }
                    
                    $\Pi_{\mathcal{R}}[A] \leftarrow R_{map, A}$
                }
                \Return $\Pi_{\mathcal{R}}$
            \end{algorithm}
    
    \subsection{Real time application of Sensor Selection and Sampling}
        During real-time operation, the system combines knowledge from its offline analysis with live data to make dynamic sensing decisions (Algorithm \ref{algo:real_time_sensing}). The wearable model periodically infers the user's current activity ($A_{pred}$) based on a window of sensor data. This inference frequency is governed by the Decision Interval ($T_{interval}$), a tunable parameter balancing responsiveness and computation. An activity is confirmed if its prediction confidence exceeds a strict confidence threshold ($\theta_{conf}$).

        The system maintains a short-term history ($H$) of these predictions to assess the stability of the current activity. If the last few predictions are consistent, the system assumes a stable activity and enforces the specific pre-computed policy ($\Pi_{\mathcal{S}}, \Pi_{\mathcal{R}}$) for that activity. If the system is unstable and multiple activities are detected at the moment, then we use a fallback confidence threshold ($\beta_{conf}$) to select a subset of activities and the policies corresponding to those activities are enforced.
        
        Finally, if the system in unstable and no activity predictions are above $\beta_{conf}$, the system enters a high-fidelity safe mode ($C_{safe}$).  In this mode, it activates a default set of sensors at an average sample rate to ensure the new activity is accurately identified. This default set is determined beforehand by applying PFI to the entire training dataset. Conversely, once predictions stabilize for a predefined duration, the system consults the pre-computed policy map. It then retrieves and applies the specific set of essential sensors ($\Pi_{\mathcal{S}}$) and their optimized sampling rates ($\Pi_{\mathcal{R}}$) corresponding to the stable activity. This policy remains active until the next transition is detected. This dual-mode approach ensures maximum data quality during periods of uncertainty and maximum energy efficiency during periods of stability.

            \begin{algorithm}[t]
                \footnotesize
                \caption{Real-Time Adaptive Sensing}
                \label{algo:real_time_sensing}
                \SetKwInOut{Input}{Input}
                \SetAlgoLined
                \DontPrintSemicolon

                \Input{
                    Policies $\Pi_{\mathcal{S}}$, $\Pi_{\mathcal{R}}$; Thresholds $\theta_{conf}, \beta_{conf}$; Safe Config $C_{safe}$; Wearable model $M_{w}$
                }
                \BlankLine

                \tcp{Initialization}
                $H \leftarrow \text{new Queue()}$ \tcp{History buffer}
                $C_{current} \leftarrow C_{safe}$

                \While{true}{
                    \tcp{Wait for next Decision Interval}
                    $\text{Wait}(T_{interval})$
                    
                    $D_{window} \leftarrow \text{CollectData}(C_{current})$
                    $probs \leftarrow M_{w}.\text{Predict}(D_{window})$
                    $A_{top} \leftarrow \text{argmax}(probs)$
                    
                    \tcp{Update History}
                    $H.\text{Enqueue}(A_{top})$
                    
                    \If{$\text{IsStable}(H) \textbf{ and } probs[A_{top}] \ge \theta_{conf}$}{
                        \tcp{Stable State: Apply specific optimal policy}
                        $C_{current} \leftarrow \text{Config}(\Pi_{\mathcal{S}}[A_{top}], \Pi_{\mathcal{R}}[A_{top}])$
                    }
                    \Else{
                        \tcp{Unstable: Check for Fallback Candidates}
                        $Candidates \leftarrow \{c \in \mathcal{A} \mid probs[c] \ge \beta_{conf}\}$
                        
                        \If{$Candidates \neq \emptyset$}{
                            \tcp{Fallback State: Union of likely policies}
                            $S_{union} \leftarrow \bigcup_{c \in Candidates} \Pi_{\mathcal{S}}[c]$
                            $R_{union} \leftarrow \text{AverageRateMap}(Candidates, \Pi_{\mathcal{R}})$
                            $C_{current} \leftarrow \text{Config}(S_{union}, R_{union})$
                        }
                        \Else{
                            \tcp{Unknown State: Revert to Safe Mode}
                            $C_{current} \leftarrow C_{safe}$
                        }
                    }
                    $\text{Apply}(C_{current})$
                }
            \end{algorithm}
    
    \subsection{Server-Side Inference}
        Data collected by the wearable according to the dynamically adapted policy is periodically transferred to an edge server for final, high-accuracy activity inference using the more powerful edge model (Figure \ref{fig:viveka:components}). This data transmission, orchestrated via Bluetooth Low Energy (BLE), can be triggered at regular intervals or when the on-device memory buffer approaches its capacity. The energy-efficient sensing strategies offer a dual benefit: in addition to reducing sensor power consumption, they also decrease the data payload that must be transmitted, which in turn lowers the total energy cost of communication. Upon receiving the sparsely-sampled data, the edge server first performs a reconstruction step, upsampling the signal from each sensor back to the original base sampling rate via signal interpolation. This reconstructed data is then fed into the edge model for final activity classification.

%% file: sections/evaluation.tex
\section{Evaluation}
    \subsection{Experiment Setup}
        This section details the methodology used to evaluate Viveka. The experiments are designed to answer the following questions:
        \begin{itemize}[leftmargin=8pt]
            \item How significant are the energy savings achieved by our method compared to state-of-the-art approaches?
            \item What is the trade-off between the achieved energy savings and the resulting HAR performance?
        \end{itemize}

        \emph{Implementation Details}:
        The simulation framework is developed in Python. We utilized the scikit-learn library for data preprocessing and employed TensorFlow and TensorFlow Lite (TFLite) to train and optimize models for on-device inference on resource-constrained hardware. All models were trained on a desktop system running Ubuntu 24.04.3 LTS, equipped with a 13th Gen Intel Core i7-13700HX processor, 16 GB of RAM, and an NVIDIA GeForce RTX 4060 GPU.

        \emph{Dataset}:
        Our evaluation uses two real world datasets (Table \ref{tab:datasets}):
        \begin{itemize}[leftmargin=8pt]
            \item MHEALTH \cite{bib:eval:mhealthdroid,bib:eval:mhealth}: It comprises recordings from 10 volunteers performing 12 physical activities and features data from inertial sensors (accelerometer, gyroscope, magnetometer) placed on the subject's chest, right wrist and left ankle. A two-lead ECG sensor was also placed on the chest. The dataset contains data from eight sensors in total, all of which were sampled at 50 Hz.
            \item PAMAP2 \cite{bib:eval:pamap2}: It consists of recordings from 9 volunteers performing 18 activities wearing sensors on hand, chest and ankle (accelerometer, gyroscope, magnetometer, temperature) and a heat rate monitor. There are 16 sensors in the dataset sampled at 100Hz.
        \end{itemize}
        
        \begin{table}[h]
            \centering
            \caption{Sensor datasets}
            \label{tab:datasets}
            \resizebox{\columnwidth}{!}{%
            \begin{tabular}{lcccc}
                \toprule
                \textbf{Dataset} & \textbf{Users} & \textbf{Sensor Placement} & \textbf{Sensors} & \textbf{Rate} \\
                \midrule
                \textbf{MHEALTH} & 10 & Chest, R. Wrist, L. Ankle & Acc, Gyro, Mag, ECG & 50 Hz \\
                \textbf{PAMAP2} & 9 & Hand, Chest, Ankle & Acc, Gyro, Mag, Temperature, HR & 100 Hz \\
                \bottomrule
            \end{tabular}%
            }
            
        \end{table}

        \emph{HAR Models and Training Protocol}:
        We employed the TinierHAR model \cite{bib:eval:tinier_har} for our experiments. The model on the smart wearable is a lightweight version of the more complex TinierHAR model executed on the edge server. For all datasets, sensor data was segmented using a 4-second sliding window with 50\% overlap. The data was partitioned into training (80\%) and testing (20\%) sets by subject, so that no subject appears in both partitions and the reported results reflect performance on held-out users. To ensure the edge model remains accurate on the subsampled, reconstructed input it receives at runtime under Viveka's policy, it is trained with rate-degradation augmentation. The training signals are randomly downsampled to rates drawn from the deployment rate grid and reconstructed by linear interpolation, matching the online reconstruction path. This co-designs training with deployment, closing the gap between offline policy computation and online execution. On the wearable, inference is performed periodically according to a decision interval, whose impact on performance and energy is evaluated in our results.
    
    \subsection{Baseline}
    \label{eval:baseline}
        To rigorously evaluate our proposed method, Viveka, we compare it against multiple baselines that span a wide spectrum of approaches. These baselines are systematically designed to represent the state-of-the-art and to isolate the individual contributions of our system's core components: sensor selection (global vs. activity-aware) and sampling rate adaptation (static vs. dynamic).
        \begin{enumerate}[leftmargin=8pt]
            \item \textbf{Oracle}: The Oracle is assumed to have perfect, a priori knowledge of the user's true activity at all times. For any given activity, it activates only the optimal, predetermined set of sensors and operates them at the minimum sampling rate required, as determined by the Nyquist-Shannon theorem. This baseline is not a practical competitor but serves to contextualize performance by establishing the best achievable result.
            \item \textbf{Always On (AlwaysOn)}: This baseline represents a standard, non-adaptive approach. All available sensors are continuously active, operating at the maximum static sampling frequency. It serves as an upper bound for data sensing and energy consumption.
            \item \textbf{Variance-based Sampling (VS)}: In this baseline, all sensors remain active, but their sampling rates are adapted based on signal variance similar to \cite{bib:intro:lasa}. The rate is increased during periods of high variance (indicating activity changes) and decreased during periods of low variance.
            \item \textbf{Global Sensor Selection (GS)}: Only the globally most important subset of sensors, identified by PFI, is activated. This fixed set of sensors operates at the maximum static sampling rate throughout the evaluation which mimics \cite{bib:intro:min_cost, bib:intro:shapley}.
            \item \textbf{Global Sensor Selection with Variance-based Sampling (GS-VS)}: This baseline combines global selection with variance-based sampling. The globally important sensor subset is always active, but its sampling rates are dynamically adjusted based on signal variance \cite{bib:intro:min_cost, bib:intro:adasense}.
            \item \textbf{Activity-based Sensor Selection (AS)}: Based on the inferred activity, a specific subset of sensors is activated. All sensors in this active set operate at a fixed, maximum sampling rate. It is a modified version of \cite{bib:related:idss} with fixed sampling rates.
            \item \textbf{Activity-based Sampling Rate (AR)}: In this approach, all sensors are active. Upon inferring an activity change, the sampling rates for all sensors are increased uniformly. When the activity stabilizes, the rates are gradually decreased \cite{bib:intro:adasense}.
            \item \textbf{Activity-aware Sensor Selection with Variance-based Sampling (AS-VS)}: This approach activates an activity-specific set of sensors and then further optimizes power by adapting their sampling rates uniformly based on real-time signal variance. It is a modified version of \cite{bib:related:idss} with sampling rate variation.
            \item \textbf{Viveka}: Our proposed approach, where sensors are selected based on the inferred activity and their sampling rates are individually adjusted according to the spectral energy analysis.
        \end{enumerate}

        \begin{table}[h]
            \centering
            \caption{Sensors for experiment}
            \label{tab:eval:datasheet}
            \begin{tabular}{||p{2cm} p{3cm} ||} 
                \hline
                Sensor type & Commercial sensor\\ [0.5ex] 
                \hline\hline
                Accelerometer & BMI160 \cite{bib:eval:bmi160} \\
                Gyroscope & BMI160 \cite{bib:eval:bmi160} \\
                Magnetometer & BMM350 \cite{bib:eval:bmm350} \\
                ECG & MAX30003 \cite{bib:eval:max30003} \\
                Temperature & NST112 \cite{bib:eval:nst112} \\
                BLE & nRF52840 \cite{bib:eval:nrf52840}\\
                \hline
            \end{tabular}
        \end{table}
    
    \subsection{Evaluation Metrics}
        We analyze the performance of the proposed method against the baselines using metrics for classification (F1-score), data overhead (data reduction, normalized data collection) and energy consumption (energy savings, normalized energy consumption).
        \begin{itemize}[leftmargin=8pt]
            \item \textbf{Classification Performance}: The primary metric for HAR performance is the F1-score, which is the harmonic mean of precision and recall. Precision measures the accuracy of positive predictions, while recall measures the ability to capture all actual positive instances. We chose the F1-score due to its robustness in handling the class imbalances often present in real-world activity datasets. It is calculated as:
                \begin{align}
                \text{F1-score} = 2 \times \frac{\text{Precision} \times \text{Recall}}{\text{Precision} + \text{Recall}}
                \end{align}
            We compare the F1-score of our method with the baselines to quantify any impact on classification accuracy resulting from our data reduction techniques.
            \item \textbf{Data Reduction}: We quantify data reduction by comparing the total volume of data collected by our approach to that of the All Sensors On baseline. The result is then presented as a normalized value relative to the baseline.
            \item \textbf{Energy Consumption}: We use an energy model that provides a holistic estimate of power consumption by aggregating three key components: sensing, computation, and communication. The total energy, $E_{\text{total}}$, is defined as:
            \begin{align}
            E_{\text{total}} = E_{\text{sense}} + E_{\text{compute}} + E_{\text{comm}}
            \end{align}
            Each component is modeled as follows:
            \begin{itemize}
                \item \textbf{Sensing Energy ($E_{\text{sense}}$)}: This component is calculated using power consumption values from the datasheets of commercially available sensors, as detailed in Table \ref{tab:eval:datasheet}. Our model accounts not only for the energy consumed during active sampling but also for the transactional energy costs of state transitions (i.e., enabling/disabling sensors and changing their sampling rates).
                \item \textbf{Computation Energy ($E_{\text{compute}}$)}: The energy cost of a single HAR model inference is measured empirically. We executed the model for 10,000 consecutive iterations on our target smartwatch (TicWatch Pro 5) and measured the average energy draw using the Android Power Profiler. The energy consumptions are $\approx 15.2$ $\mu$J and $\approx 47.2$ $\mu$J per inference for MHEALTH and PAMAP2 respectively.
                \item \textbf{Communication Energy ($E_{\text{comm}}$)}: The energy required for data transmission is modeled based on specifications from the datasheet for the Nordic Semiconductor nRF52 series Bluetooth Low Energy (BLE) chipset, a common component in modern wearables.
            \end{itemize}
        \end{itemize}
    
    \subsection{Overall Impact of Viveka}
        \begin{figure}[h]
            \centering
            \includegraphics[scale=0.5]{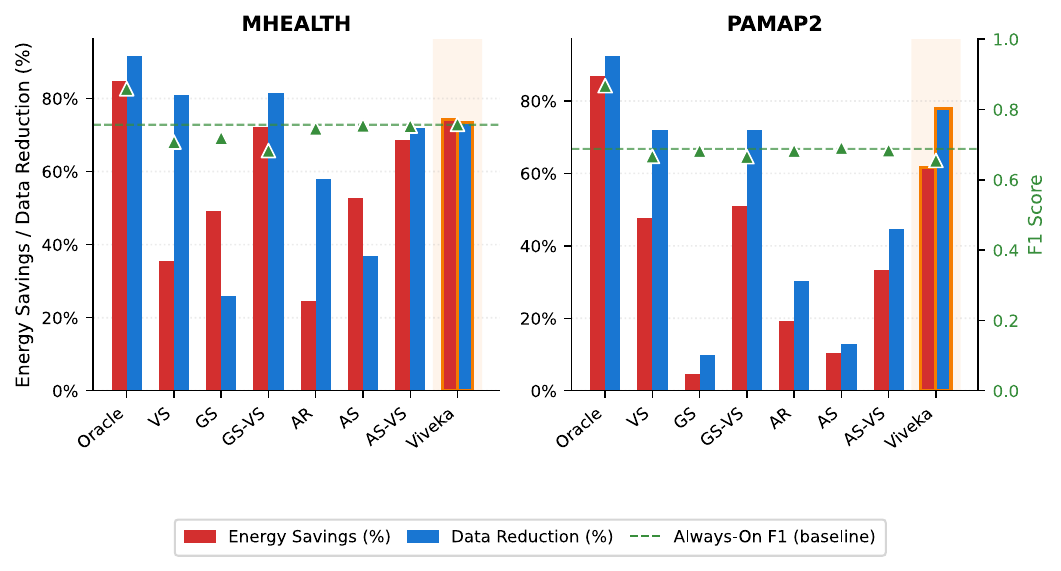}   
            \caption{Viveka achieves the highest energy and data reduction across both datasets, using only ~25\% of Always On energy on MHEALTH and ~36\% on PAMAP2, while maintaining F1 scores comparable to the Always On baseline.}
            \label{fig:eval:overall}
        \end{figure}
        In this experiment, we evaluate the performance of Viveka in terms of energy consumption and data sensed for transmission compared to the baselines discussed in Section \ref{eval:baseline}. The hyperparameters of Viveka are dataset-specific: \{MHEALTH: PFI Threshold: 0.2, FFT Energy Threshold: 0.95, Confidence Threshold: 0.7, Fallback Confidence Threshold: 0.1\} and \{PAMAP2: PFI Threshold: 0.05, FFT Energy Threshold: 0.8, Confidence Threshold: 0.7, Fallback Confidence Threshold: 0.0\}. The values of the parameters were decided based on the parameter tuning experiments discussed in the Section \ref{eval:ptune}. The inference interval is set to 2 seconds for both the datasets.

        Viveka achieves the highest energy and data efficiency among all non-oracle strategies on both datasets. On MHEALTH, Viveka reduces energy consumption by 74\% and data transmission by 73\% relative to Always On, while maintaining the F1 score close to Always On baseline. On PAMAP2, Viveka reduces energy by 62\% and data by 78\%. Among the competing strategies, GS-VS approaches Viveka's energy savings on MHEALTH (72\% vs. 74\%) but does so at a significantly higher F1 cost (0.683 vs. 0.758). Strategies that aggressively reduce data transmission (VS, GS-VS) consistently sacrifice classification accuracy, while activity-aware strategies (AR, AS, AS-VS) recover some accuracy but at substantially lower efficiency.

        \textbf{Takeaway}: Viveka uniquely balances efficiency and accuracy. It achieves the largest reduction in energy and data transmitted while keeping F1 within acceptable bounds of the Always On baseline. It demonstrates that context-aware, tiered decision making outperforms both traditional duty-cycling and static activity based policies.

        \textbf{Operational longevity}: To evaluate the impact of Viveka on operational longevity, we use the energy consumption profiles derived from both datasets. We assume the total energy budget is dominated by sensing, inference computation and BLE transmission. While real-world deployments involve additional constant energy factors, we focus on these primary components for simplicity. We assume a total battery capacity of 300 mAh, which is representative of modern smartwatches. The activity data available per user in the MHEALTH and PAMAP2 datasets used for testing span 38 minutes and 71 minutes, respectively, on average. Based on the power consumption analysis of the experimental results shown in Figure \ref{fig:eval:overall}, Table \ref{tab:eval:energy} projects the operational life of the BSN on a single charge. The results indicate that the AlwaysOn configuration, requires recharging approximately every 8.7(MHEALTH) or 4.1(PAMAP2) days. This short duration is primarily driven by the high duty-cycle of the sensors and the continuous data stream over BLE. In contrast, the Viveka strategy achieves a projected battery life of 34.6(MHEALTH) or 10.8(PAMAP2) days. This improvement demonstrates the effectiveness of context-aware sensor gating. The difference in the number of days for Viveka between MHEALTH and PAMAP2 stems from two factors: (1) PAMAP2's sensor suite (16 sensors) is bigger than MHEALTH suite (8 sensors). In addition PAMAP2 is having a higher maximum sample rate (100Hz) compared to MHEALTH (50Hz). (2) On MHEALTH, Viveka aggressively turns off gyroscopes, whereas in PAMAP2, they are mostly relevant. The higher sampling rate and sensor count results in BLE transmission energy to be elevated. Together these factors directly translate to the observed difference in longevity.
        \begin{table}[htbp]
            \centering
            \caption{Projected BSN Operational Longevity using a 300 mAh Battery for MHEALTH and PAMAP2}
            \label{tab:eval:energy}
            \footnotesize
            \setlength{\tabcolsep}{2.5pt}
            \begin{tabular}{@{}lcccc@{}}
                \toprule
                \textbf{Strategy} & \multicolumn{2}{c}{\textbf{38-Min Activity}} & \multicolumn{2}{c}{\textbf{71-Min Activity}} \\
                \cmidrule(lr){2-3} \cmidrule(lr){4-5}
                & \textbf{\shortstack{Power\\(mW)}} & \textbf{\shortstack{Life\\(Days)}} & \textbf{\shortstack{Power\\(mW)}} & \textbf{\shortstack{Life\\(Days)}} \\
                \midrule
                AlwaysOn             & 5.30 & 8.7  & 11.27 & 4.1  \\
                VS                   & 3.42  & 13.5 & 5.89 & 7.8  \\
                GS                   & 2.69  & 17.2 & 10.77 & 4.3  \\
                GS-VS                & 1.47  & 31.4 & 5.51 & 8.4  \\
                AR                   & 3.99  & 11.6 & 9.10 & 5.1  \\
                AS                   & 2.50  & 18.5 & 10.11 & 4.6  \\
                AS-VS                & 1.66  & 27.8 & 7.53 & 6.1  \\
                Oracle (theoretical) & 0.81  & 56.9 & 1.48  & 31.2 \\
                \midrule
                \textbf{Viveka} & \textbf{1.34} & \textbf{34.6} & \textbf{4.29} & \textbf{10.8} \\
                \bottomrule
            \end{tabular}
        \end{table}

    \subsection{Parameter Tuning}
    \label{eval:ptune}
        \begin{figure*}
            \centering
            \includegraphics[scale=0.5]{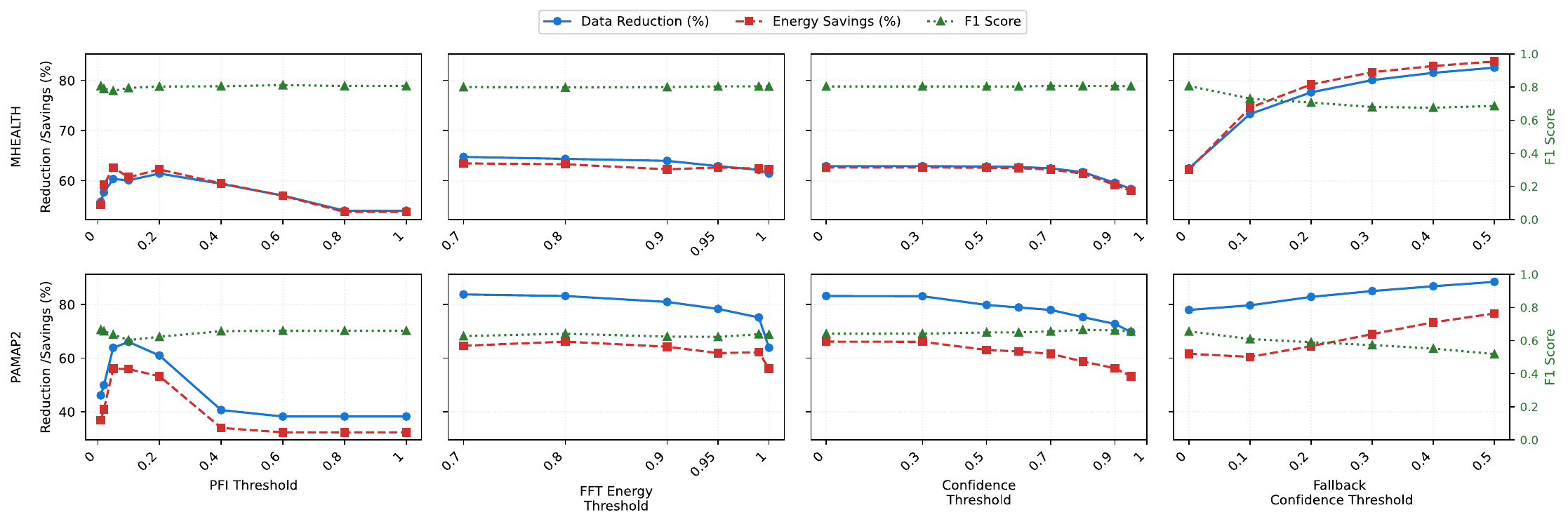}   
            \caption{Parameter Tuning. }
            \label{fig:eval:ptune}
        \end{figure*}
        Viveka performance is controlled by four hyperparameters: PFI Threshold, FFT Energy Threshold, Confidence Threshold and Fallback Confidence Threshold. In this section, we will explore tuning each parameter for both the datasets. Each parameter is optimized in isolation, while the remaining parameters are held fixed. At the start, all the parameters are at their neutal values (FFT Energy threshold: 1.0, Confidence threshold: 0.0, Fallback Confidence threshold: 0.0), which effectively disables each mechanism. Once the best value for a parameter is identified, it is locked in before the next parameter tuning begins. This approach reduces the combinatorial search space to four independent sweeps, with each sweep building on the gains of the previous one. The results are shown in Figure \ref{fig:eval:ptune}.
        \subsubsection{PFI Threshold}
            The PFI threshold ($\theta_{PFI}$) controls which sensors are retained in the per-activity policy, lower values retain more sensors. The $\theta_{PFI}$ is varied across nine discrete values ranging from 0.01 to 1.0 \{0.01, 0.02, 0.05, 0.1, 0.2, 0.4, 0.6, 0.8, 1.0\}.  On MHEALTH, data reduction and energy savings increase as $\theta_{PFI}$ rises from 0.01 to 0.2, peaking at 61.4\% and ~62\% respectively, as fewer sensors are retained and less data is transmitted. Beyond $\theta_{PFI}=0.2$, both metrics decline as over pruning forces more conservative fallback policies. F1 score follows a non-monotonic trend, dipping at intermediate values ($\theta_{PFI}=0.05$, F1=0.777) before recovering to 0.803 at $\theta_{PFI}=0.2$. This trend suggests that moderate pruning removes noisy sensors without sacrificing classification accuracy. On PAMAP2, energy savings and data reduction peak earlier at $\theta_{PFI}=0.05$ (63.9\% data reduction, ~56\% energy savings), with a sharp F1 drop to 0.603 at $\theta_{PFI}= 0.1$. The values above 0.2 degrade all three metrics simultaneously as pruning eliminates sensors critical to activity discrimination.
            
            \textbf{Takeaway}: A very low threshold keeps nearly all sensors, while an strict threshold eliminates sensors that are still relevant, forcing the system into fallback modes that activate a broader sensor set, undermining the intended savings. The resulting non-monotonic curve reflects this tension: MHEALTH settles at $\theta_{PFI}=0.2$, where the pruned sensor subset is small enough to reduce sensing and BLE energy while keeping the system in its high-certainty, low-sensor mode enough to maintain F1. PAMAP2's richer and more variable activity space requires a stricter cutoff ($\theta_{PFI}=0.05$) to eliminate sufficient redundant sensors. Beyond that point, critical sensors are dropped, activity recognition degrades and the system spends more time in the higher-sensor fallback state.

        \subsubsection{FFT Energy Threshold}
            The FFT Energy threshold ($\theta_{FFT}$) controls the sample rates of sensors. For the tuning of $\theta_{FFT}$, the $\theta_{PFI}$ is locked at its best values (MHEALTH: 0.2, PAMAP2: 0.05). The $\theta_{FFT}$ is varied across six discrete values ranging from 0.7 to 1.0 \{0.7, 0.8, 0.9, 0.95, 0.99, 1.0\}. On MHEALTH, all values below 1.0 improve data reduction and energy savings over the reference ($\theta_{FFT}$=1.0: 61.4\% data, ~62\% energy), with $\theta_{FFT} = 0.7$ achieving the highest data reduction (64.8\%) and lowest energy. However, F1 drops below the reference for all values except $\theta_{FFT} =0.95$ (F1=0.8037), which marginally improves accuracy while still gaining in data reduction and energy savings. On PAMAP2, FFT-based rate reduction delivers dramatically larger gains across all metrics. All the values reduce energy from ~54\% to ~66\% and data reduction from 63.9\% to 84\%. $\theta_{FFT}=0.8$ achieves the best balance, with 83.2\% data reduction, 66.2\% energy savings and the highest F1 among all FFT values (0.641), while $\theta_{FFT}=0.7$ marginally improves data reduction (83.8\%) but at a lower F1 (0.627).
            
            \textbf{Takeaway}: FFT Energy threshold is the primary lever for controlling data volume. In PAMAP2, lowering the threshold from 1.0 to 0.8 cuts BLE transmission energy from 7,412 mJ to 3,463 mJ by reducing the sample rate alone, with sensor count unchanged. MHEALTH shows a more modest but consistent gain (BLE drops from 1,216 mJ to 1,170 mJ at the chosen $\theta_{FFT}=0.95$), since its activities are less rate-sensitive.

        \subsubsection{Confidence Threshold}
            The Confidence threshold ($\theta_{conf}$) controls the activation of the context aware policy in Viveka. The $\theta_{PFI}$ and $\theta_{FFT}$ values are locked in ($\theta_{PFI}$ (MHEALTH: 0.2, PAMAP2: 0.05), $\theta_{FFT}$(MHEALTH: 0.95, PAMAP2: 0.80)). The $\theta_{conf}$ is varied across nine values ranging from 0.0 to 1.0 \{0.0, 0.3, 0.5, 0.6, 0.7, 0.8, 0.9, 1.0\}. On MHEALTH, energy savings and data reduction decrease modestly as the threshold increases as fewer windows qualify for the optimized low-rate policy. For F1, the improvement is negligible. On PAMAP2, the same tradeoff is more pronounced: data reduction falls from 83.2\% to 69.8\%, and energy savings decline from 64\% to 55\%, while F1 rises slightly from 0.641 to a peak of 0.665.
            
            \textbf{Takeaway}: Confidence threshold controls how frequently the system selects the reduced-sensor, reduced-rate granular policy versus deferring to a multi-sensor fallback. Too low a threshold applies the optimized low-sensor, low-rate settings even on uncertain windows, occasionally feeding the classifier degraded or incomplete feature vectors and lowering F1. Too high a threshold keeps the system in its conservative fallback state, where more sensors remain active at default rates. $\theta_{conf} = 0.7$ is the optimal value for both datasets: it ensures the reduced-sensor, lower-rate policy is applied only under confidence, yielding a small but consistent F1 gain at the cost of marginally higher energy use.

        \subsubsection{Fallback Confidence Threshold}
            The Fallback confidence threshold ($\beta_{conf}$) controls sensor activations during uncertainity. The $\theta_{PFI}$ ,$\theta_{FFT}$ and $\theta_{conf}$ values are locked in ($\theta_{PFI}$ (MHEALTH: 0.2, PAMAP2: 0.05), $\theta_{FFT}$(MHEALTH: 0.95, PAMAP2: 0.80), $\theta_{conf}$(MHEALTH: 0.7, PAMAP2: 0.7)). The $\beta_{conf}$ is varied across six values ranging from 0.0 to 0.5 \{0.0, 0.1, 0.2, 0.3, 0.4, 0.5\}. On MHEALTH, increasing threshold from 0.0 progressively improves both energy savings and data reduction, from 62.5\% to 82.6\%, as more windows are routed through the combined-sensor policy at full rate rather than the expensive intermediate fallback. F1 declines steadily from 0.8065 to 0.685, with $\beta_{conf} = 0.1$ offering the best tradeoff (73.3\% data reduction, 74.6\% energy savings, F1=0.730). On PAMAP2, the threshold has opposite effect: $\beta_{conf}=0.1$ simultaneously worsens energy savings (61.6\% $\rightarrow$ 60.5\%) and reduces F1 (0.655 $\rightarrow$ 0.609), as the combined-sensor policy at high sampling rate incurs more overhead than it saves.
            
            \textbf{Takeaway}: The Fallback confidence threshold determines which candidate activities contribute sensors during uncertainty windows. With $\beta_{conf}=0.0$, all activities pass the softmax threshold (every probability is strictly greater than zero), so the union of all activity-specific sensors is activated at high sampling rate whenever high certainty is not met, forming a broad safety net at high sensor and rate cost. Raising $\beta_{conf}$ restricts this to only the most probable candidate activities, shrinking the active sensor set and proportionally reducing both sensing energy and transmitted data. In MHEALTH, $\beta_{conf} = 0.1$ narrows the sensor pool sufficiently to yield ~12 percentage points of additional energy savings (62\% $\rightarrow$ 75\%) and 11 percentage points of data reduction, though the F1 dropped from 0.806 to 0.730. In PAMAP2, even a small $\beta_{conf}$ value overfilters sensors needed for its broader activity vocabulary, so $\beta_{conf} = 0.0$ is retained to preserve F1, relying instead on Confidence threshold driven savings.

    \subsection{Impact of Decision Interval}
        \begin{figure}
            \centering
            \includegraphics[scale=1]{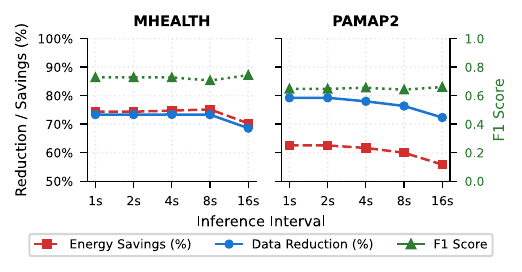}   
            \caption{Decision interval. As decision interval increases, energy savings and data reduction decreases for both datasets.}
            \label{fig:eval:dec_interval}
        \end{figure}

        The decision interval ($T_{interval}$) tuning is conducted with all algorithm parameters fixed at their tuned values: for MHEALTH, $\theta_{PFI}=0.2$, $\theta_{FFT}=0.95$, $\theta_{conf}=0.7$, $\beta_{conf}=0.1$; for PAMAP2, $\theta_{PFI}=0.05$, $\theta_{FFT}=0.8$, $\theta_{conf}=0.7$, $\beta_{conf}=0.0$. The $T_{interval}$ is varied over [1, 2, 4, 8, 16] seconds for both datasets (Figure \ref{fig:eval:dec_interval}).

        For MHEALTH, intervals of 1s and 2s produce identical results (energy savings=74.4\%, data reduction=73.3\%, F1=0.728), with marginal further gains at 4s and 8s (savings reaching 75.2\%). At 16s, performance degrades noticeably, energy savings drop to 70.3\% and data reduction to 68.6\%, as the system fails to keep up with MHEALTH's frequent activity transitions. For PAMAP2, the trend is monotonically degrading: energy savings fall from 62.5\% at 1s to 55.8\% at 16s, while data reduction drops from 79.2\% to 72.3\%. F1 remains relatively stable across the sweep for both datasets.

        \textbf{Takeaway}: The primary cost of a longer decision interval is not reduced inference energy, which is negligible relative to sensing and transmission, but rather stale sensor-rate policies. When policy updates are infrequent, the system holds its current sensor subset and sample rate configuration beyond the activity boundary, collecting and transmitting more data than necessary until the next decision point. Shorter intervals enable tighter policy tracking, keeping the active sensor count and sample rates aligned with the current activity. A 2s interval is suitable for both MHEALTH and PAMAP2 based on the observed results.

    \subsection{Sensor Activations}
        \begin{figure*}[h]
            \centering
            \resizebox{0.7\textwidth}{!}{
                \includegraphics[scale=0.7]{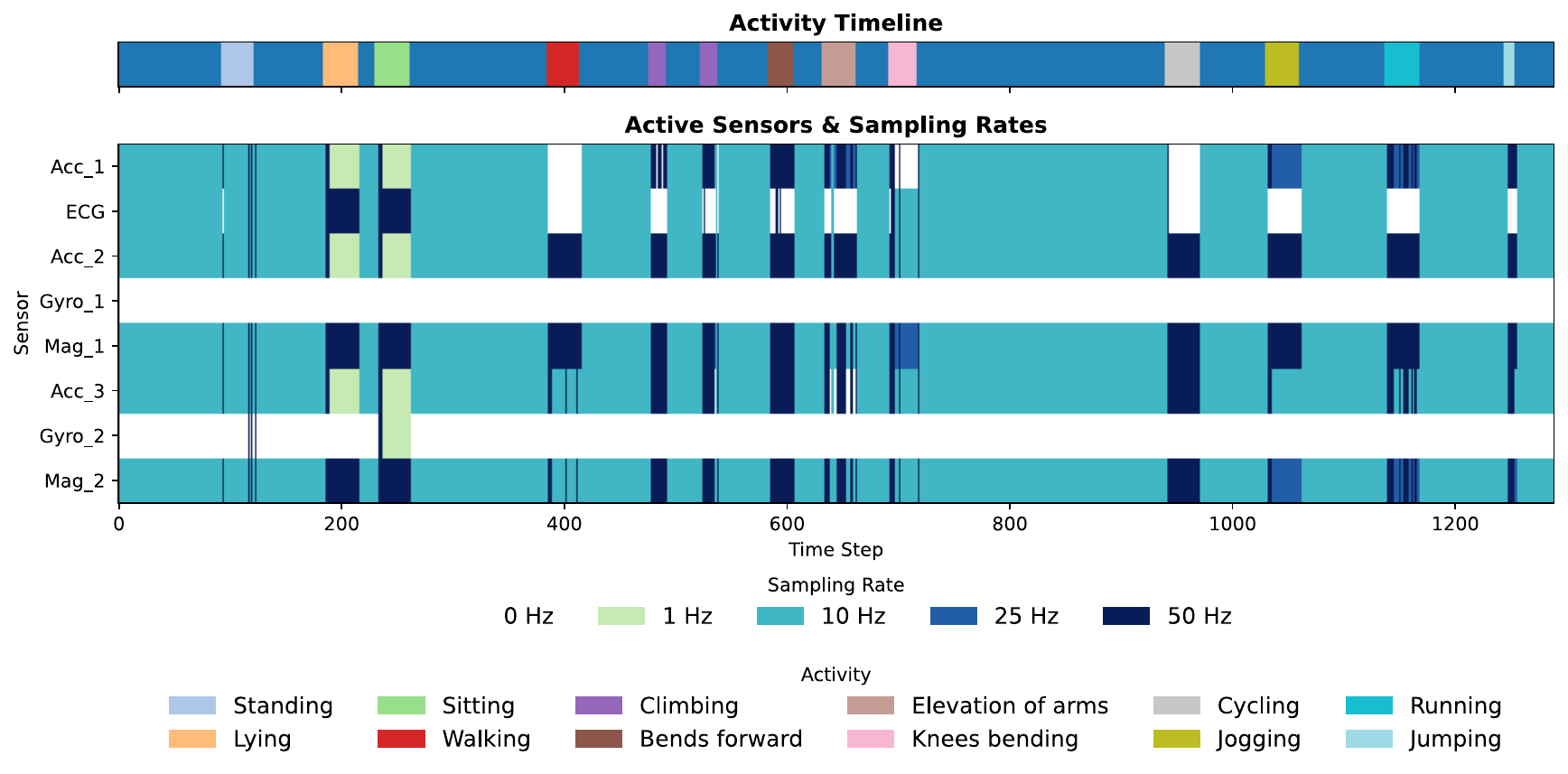}   
            }
            \caption{Sensor activations for MHEALTH. The top ribbon displays the ground truth activity timeline, while the heatmap below illustrates the dynamic sensor selection and sampling rate adaptation (0-50 Hz) triggered by the Viveka framework. It highlights the system's ability to selectively prune sensors and scale sampling rate according to activity intensity.}
            \label{fig:eval:activation}
        \end{figure*}

        In this analysis, we examine the granular sensor activations of the MHEALTH dataset, configured with the optimal parameters identified in the previous sections. The sensor activation map (Figure \ref{fig:eval:activation}) provides a granular view of the Viveka framework's runtime behavior, confirming its ability to decouple sensing cost from continuous monitoring.
        
        The transitions in the sensor heatmap (bottom) align perfectly with the activity transitions in the ground truth ribbon (top). Even Transient periods (brief transitions between activities) are captured with distinct sensor configurations. The system does not merely classify steady states, it is highly responsive to state changes, correctly identifying the start and end of specific movements without significant lag.

        The rows corresponding to Gyroscopes remain predominantly white (0 Hz) or light green (1 Hz) across most activities. They are only activated (darker colors) during specific complex tasks. This confirms that the most power-hungry sensors are often redundant for recognizing common human activities. The system successfully prunes them, relying instead on cheaper modalities (Accelerometers), thereby validating the PFI feature selection strategy.

        There is a clear correlation between activity intensity and color saturation. High-intensity activities (e.g., Jogging, Running) trigger dark navy blocks (50 Hz), while low-intensity activities (e.g., Sitting, Standing) trigger light green/teal blocks (1-10 Hz). Viveka exhibits context-aware scalability. It dynamically scales the sampling rates proportional to the activity rigor, ensuring no energy is wasted oversampling stationary behavior.

        The vertical slices of the heatmap show that for any given time step, only a fraction of the available sensors are active (non-white). Very rarely is the entire column lit up. Viveka achieves sparsity, proving that a full sensor suite is almost never required simultaneously. Each activity requires only a minimal subset of sensors.

        Within continuous activity blocks, the color patterns remain consistent (solid blocks of color rather than flickering noise). The strategy demonstrates stability. It avoids sensor thrashing (rapidly toggling sensors On/Off), which is critical because the energy overhead of powering up a sensor often outweighs the savings of turning it off for a split second.

        \textbf{Takeaway}: The visualization confirms that Viveka achieves efficiency through context-aware semantic adaptation, replacing static sampling schedules with dynamic, stable and minimalist sensing signatures tailored to the instantaneous activity rigor of the user.

%% file: sections/related.tex
\section{Related Work}
    Energy efficiency is a paramount concern in wearable Human Activity Recognition (HAR), driving research into resource-aware methodologies. The literature has largely pursued this goal through two primary avenues: sensor selection and adaptive sampling, with the most advanced works combining the two.

    Sensor Selection Strategies: 
    Sensor selection methods aim to reduce power consumption by activating only the most informative sensors. Static approaches pre-determine an optimal sensor set using techniques like game theory \cite{bib:intro:shapley}, bio-inspired algorithms \cite{bib:intro:glowworm}, or cost optimization \cite{bib:intro:min_cost}. However, their fixed nature limits adaptability to dynamic user states. Dynamic methods offer more flexibility by using Bayesian models to quantify uncertainty (VFDS) \cite{bib:intro:vfds} or employing specialized architectures that can internally weigh sensor importance (DANA) \cite{bib:intro:dana}. The primary drawback of these advanced techniques is that they often require complex, model-specific architectures, sacrificing the modularity needed to work with general-purpose HAR models.

    Adaptive Sampling Techniques:
    Adaptive sampling techniques focus on reducing data volume by modulating the sampling frequency. These can be context-aware \cite{bib:intro:context_aware}, using a lightweight model to infer the activity and adjust rates accordingly (FreqSense) \cite{bib:intro:freqsense}, or purely data-driven, reacting directly to signal properties like variance (LASA-IoT) \cite{bib:intro:lasa}. A key limitation is that they often apply a uniform sampling rate to all active sensors, failing to account for the fact that different sensors have different information content and power costs, leading to suboptimal energy savings.

    Co-optimization of Selection and Sampling: 
    Recognizing that these problems are coupled, the most advanced research seeks to co-optimize both selection and sampling. A prevalent paradigm is the hierarchical system, where a low-power configuration detects a potential change, triggering a more resource-intensive mode for precise identification \cite{bib:intro:wear_energy}. While effective, these systems often act as a simple binary switch, lacking the granularity to modulate individual sampling rates or select specific sensor subsets during transitions. More sophisticated frameworks like CoSS \cite{bib:intro:coss} use reinforcement learning to learn a joint policy. However, this approach is model-integrated, requiring specialized training and resulting in a static configuration that cannot adapt to real-time dynamics post-deployment.

%% file: sections/conclusion.tex
\section{Conclusion}
    The full potential of smart wearables hinges on extending their battery life, as frequent recharging creates data gaps and user inconvenience. To address this, we presented Viveka, a context-aware sensing framework designed to dramatically improve both energy and data efficiency. Viveka adapts to the user's context by using permutation-based feature importance to select the most relevant sensors and the spectral energy analysis to determine their optimal sampling rates for the current activity. Our framework also includes robust fallback mechanisms to ensure high accuracy is maintained, even when on-device context detection is uncertain.

    Our extensive evaluation on the MHEALTH and PAMAP2 datasets demonstrates that Viveka achieves the best tradeoff between accuracy and energy efficiency compared to existing strategies. The framework achieves upto 75\% energy savings and 78\% data reduction compared to standard baselines, while maintaining classification accuracy within 3-5\% of the baselines. We also conducted a sensitivity analysis across all the parameters used in the framework to identify the sweet spot.

    Currently, Viveka relies on offline sensor selection to define sensor subsets. The future iterations will explore online learning to dynamically adjust feature importance thresholds in real-time. We plan to extend Viveka from a single-device optimization to a distributed multi-wearable ecosystem. By coordinating sensor activation across multiple devices (e.g., smartwatch, smartphone, earbuds), the framework could dynamically offload sensing tasks to the device with the highest residual battery, maximizing the collective lifespan of the user's Body Area Network (BAN). Our current energy metrics are derived from validated sensor datasheets. Future work will involve deploying Viveka on ultra low power microcontrollers (e.g., ARM Cortex-M4) to quantify the precise interplay between sensor activations and the associated hardware load.

%% file: sections/appendix.tex
\section{NP-Hardness Proof} 
\label{apdx:nphard}
    \begin{theorem}
        The APSSE problem formulated in equations \eqref{eq:min:obj}--\eqref{eq:cons:3} is NP-Hard.
    \end{theorem}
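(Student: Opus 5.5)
We prove NP-hardness by a polynomial-time reduction from the decision version of the 0/1 Knapsack problem, which is NP-complete. Since $C(x)$ is an abstract, instance-supplied function, we may choose a special case of APSSE in which $C$ has a simple, polynomially computable form. Hardness of that restricted class implies hardness of APSSE in general. Given a Knapsack instance, i.e.\ items $i=1,\dots,n$ with weights $w_i$, values $v_i$, capacity $W$ and target value $V$, we construct one sensor $s_i$ per item. Each sensor has exactly two rates, $R_{s_i}=\{0,r_{s_i,1}\}$, so the $\delta$-variables reduce to one binary choice per sensor, namely ON or OFF. We set $E(s_i,r_{s_i,1})=w_i$ and $E(s_i,0)=0$, as the model requires.

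\emph{Error function.} We choose the classification error to be additive in the information lost by switching sensors off:
\begin{align*}
C(x)=\sum_{i}v_i\,\delta_{s_i,0}.
\end{align*}
This can be read as each sensor contributing $v_i$ units of error when absent. We set $C_{threshold}=\sum_i v_i - V$. Then a configuration with active set $T$ satisfies constraint \eqref{eq:cons:1} if and only if $\sum_{i\in T}v_i\ge V$. Its energy is $\sum_{i\in T}w_i$. Hence the APSSE optimum is at most $W$ if and only if the Knapsack instance is a yes-instance. The construction has size linear in $n$, and $C$ is evaluable in polynomial time.

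\emph{Reading off the optimal value.} The equivalence lets an APSSE solver decide Knapsack. We run the solver and compare the returned minimum energy with $W$. An infeasible APSSE instance corresponds to a no-instance, namely the case $\sum_i v_i<V$. We therefore conclude that APSSE is NP-hard. Equivalently, the restricted case is exactly the minimization form of Knapsack, i.e.\ a covering knapsack.

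\emph{Main obstacle.} The delicate step is justifying that $C$ may be taken to be an arbitrary additive function. In the paper's setting, $C(x)$ is the expected error of a learned model, not an input. We would address this in two ways. First, we would note that hardness is claimed for the optimization problem with $C$ given as a (polynomial-time) oracle or expression, which is the standard reading. Second, if a more realistic witness is wanted, we would sketch a synthetic classification task with independent per-sensor features in which removing sensor $i$ raises the error by an amount proportional to $v_i$. Any such additive instance suffices. If values must be normalized so that errors lie in $[0,1]$, we would divide all $v_i$ and $C_{threshold}$ by $\sum_i v_i$, which preserves the equivalence.
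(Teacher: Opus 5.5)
Your reduction is correct, but it uses a different source problem than the paper. The paper reduces from Weighted Set Cover. Each sensor is a set with ON-cost $w_i$, the universe is the set of activity types, and $C$ is defined so that $C(x)\le C_{threshold}$ holds exactly when the active sensors' features cover every activity type. That is a coverage-type, non-additive error. You instead take the additive error $C(x)=\sum_i v_i\delta_{s_i,0}$ and arrive at covering (minimization) knapsack. The two proofs share a skeleton: two rates per sensor, energy equal to item or set cost, and an engineered $C$. The difference is which structure of $C$ carries the hardness.

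\emph{What your route buys.} It shows the problem is already hard for the simplest separable error model, with no interaction between sensors. You are also more explicit than the paper on points it leaves implicit: how $C$ is encoded and evaluated in polynomial time, the normalization to $[0,1]$, and how comparing the optimum with $W$ decides the source instance.

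\emph{What it costs.} Knapsack is only weakly NP-hard. For additive $C$ the problem has a pseudo-polynomial dynamic program and an FPTAS. With equal sensor powers (or equal $v_i$), a greedy rule solves it exactly. So your reduction alone certifies hardness that comes only from heterogeneous numeric data, and does not exclude efficient near-optimal solvers.

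\emph{What the paper's route buys.} Set Cover is NP-hard even with unit weights, so APSSE is strongly NP-hard, i.e.\ hard even when all sensors draw the same power. The paper's mapping also preserves the cost of every feasible solution exactly. It therefore carries over Set Cover's logarithmic inapproximability (unless $\mathrm{P}\neq\mathrm{NP}$ fails), which better supports the paper's turn to heuristics.
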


    \begin{proof}
        We prove this by reduction from the \textbf{Weighted Set Cover (WSC)} problem, which is known to be NP-Hard.

        \textbf{The Weighted Set Cover Problem:}
        Given a universe of elements $U = \{u_1, u_2, \dots, u_m\}$ (elements to be covered) and a collection of subsets $\mathcal{S} = \{S_1, S_2, \dots, S_n\}$ where each $S_i \subseteq U$ has an associated cost $w_i$. The goal is to select a sub-collection of sets such that their union covers all elements in $U$ and the total cost is minimized.

        \textbf{Mapping to APSSE:}
        We construct a special instance of the APSSE problem that is isomorphic to the WSC instance:

        \begin{enumerate}
            \item Mapping Activities to Universe: Let the Universe $U$ represent the set of activity types required. The condition $C(x) \leq C_{threshold}$ is equivalent to covering all necessary activity types (i.e., achieving 100\% required accuracy implies covering all elements in $U$).
            \item Mapping Sensors to Sets: Let each sensor $s$ represent a set $S_i$ in the WSC problem.
            \item Mapping Sampling Rates to Selection: Restrict each sensor $s$ to have only two rates: $R_s = \{r_{s,0}, r_{s,1}\}$ where $r_{s,0}=0$ and $r_{s,1}=r_{high}$ for simplicity.
                \begin{itemize}
                    \item Choosing index $j=0$ (OFF) corresponds to \textit{not selecting} set $S_i$. Cost = 0.
                    \item Choosing index $j=1$ (ON) corresponds to \textit{selecting} set $S_i$. Cost = $E(s, r_{s,1}) = w_i$.
                \end{itemize}
            \item Mapping Accuracy to Coverage: We define the error function $C(x)$ such that $C(x) \leq C_{threshold}$ is satisfied \textit{if and only if} the union of features provided by the selected active sensors fully covers the universe $U$. If any element $u \in U$ is uncovered, the error remains above the threshold.
        \end{enumerate}

        Conclusion: Finding the minimum energy configuration for APSSE in this instance is equivalent to finding the minimum weight collection of sets that covers $U$. Since Weighted Set Cover is NP-Hard and it can be reduced to a specific instance of APSSE, the general APSSE problem is also NP-Hard.
    \end{proof}

    \begin{remark}
    The formulated APSSE problem addresses the selection of a static configuration $x$. The problem of per-activity adaptive selection (finding a mapping from activities to configurations) is a generalization of this static case. Since the static problem (where the mapping is constrained to be constant) is NP-Hard, the general per-activity optimization problem is also NP-Hard.
    \end{remark}